\let\@secnumfont\bfseries
\def\section{\@startsection{section}{1}%
  \z@{4\linespacing\@plus\linespacing}{\linespacing}%
  {\bfseries\centering}}
\def\introsection{\@startsection{section}{1}%
  \z@{3\linespacing\@plus\linespacing}{\linespacing}%
  {\bfseries\centering}}
\def\subsection{\@startsection{subsection}{2}%
   \z@{1.25\linespacing\@plus.7\linespacing}{.5\linespacing}%
   {\normalfont\bfseries}}
\def\subsectionsinline{\def\subsection{\@startsection{subsection}{2}%
  \z@{1\linespacing\@plus.7\linespacing}{-.5em}%
  {\normalfont\bfseries}}}
\theoremstyle{definition}
\newtheorem{example}[equation]{Example}
\newtheorem*{definition*}{Definition}
\newtheorem*{example*}{Example}
\newtheorem*{problem*}{Problem}
\newtheorem*{exercise*}{Exercise}
\newtheorem*{question*}{Question}
\newtheorem*{construction*}{Construction}
\theoremstyle{remark}
\newtheorem{remark}[equation]{Remark}
\newtheorem*{note*}{Note}
\newtheorem*{notation*}{Notation}
\newtheorem*{remark*}{Remark}
\theoremstyle{plain}
\newtheorem{theorem}[equation]{Theorem}
\newtheorem{corollary}[equation]{Corollary}
\newtheorem{lemma}[equation]{Lemma}
\newtheorem*{theorem*}{Theorem}
\newtheorem*{corollary*}{Corollary}
\newtheorem*{lemma*}{Lemma}
\newtheorem*{proposition*}{Proposition}
\newtheorem*{conjecture*}{Conjecture}
\newtheorem*{claim*}{Claim}
\newtheorem*{proposal*}{Proposal}
\newtheorem*{conclusion*}{Conclusion}
\newtheorem*{hypothesis*}{Hypothesis}
\numberwithin{equation}{section}
\definecolor{refkey}{rgb}{0,.6,.4}
\renewcommand{\:}{\colon}
\DeclareMathOperator{\Aut}{Aut}
\newcommand{\CC}{{\mathbb C}}
\newcommand{\CP}{{\mathbb C\mathbb P}}
\DeclareMathOperator{\End}{End}
\DeclareMathOperator{\Hom}{Hom}
\newcommand{\PP}{{\mathbb P}}
\newcommand{\RR}{{\mathbb R}}
\newcommand{\TT}{\mathbb T}
\DeclareMathOperator{\Tr}{Tr}
\newcommand{\ZZ}{{\mathbb Z}}
\newcommand{\chiup}{\raise.5ex\hbox{$\chi$}}
\newcommand{\mstrut}{^{\vphantom{1*\prime y\vee M}}}
\newcommand{\res}[1]{\negmedspace\bigm|\mstrut_{#1}}
\newcommand{\temsquare}{\raise3.5pt\hbox{\boxed{ }}}
\newcommand{\zmod}[1]{\ZZ/#1\ZZ}
\newcommand{\zt}{\zmod2}
\DeclareMathOperator{\Pin}{Pin}
\newcommand{\CHom}{\Hom_{\CC}}
\newcommand{\GH}{G(\sH)}
\newcommand{\Gf}{\Gamma _f}
\newcommand{\Lp}{L^{\perp}}
\newcommand{\PH}{\PP\sH}
\newcommand{\PV}{\PP V}
\newcommand{\PU}{\PP U}
\newcommand{\QAut}{\Aut_{\textnormal{qtm}}}
\newcommand{\hf}{\hat{\phi }}
\newcommand{\qsym}{\QAut(\PH)} 
\newcommand{\sH}{\mathscr{H}}
\newcommand{\sL}{\mathcal{L}}
\begin{document}

\abovedisplayskip18pt plus4.5pt minus9pt
\belowdisplayskip \abovedisplayskip
\abovedisplayshortskip0pt plus4.5pt
\belowdisplayshortskip10.5pt plus4.5pt minus6pt
\baselineskip=15 truept
\marginparwidth=55pt

\renewcommand{\labelenumi}{\textnormal{(\roman{enumi})}}




 \title{On Wigner's Theorem} 
 \author[D. S. Freed]{Daniel S.~Freed}
 \thanks{The author is supported by the National Science Foundation under
grant DMS-0603964}
 \address{Department of Mathematics \\ University of Texas \\ 1 University
Station C1200\\ Austin, TX 78712-0257}
 \email{dafr@math.utexas.edu}
 \dedicatory{For Mike Freedman, on the occasion of his $60^{\textnormal{th}}$
birthday}
 \date{July 31, 2012}
  \begin{abstract} 
 Wigner's theorem asserts that any symmetry of a quantum system is unitary or
antiunitary.  In this short note we give two proofs based on the geometry of
the Fubini-Study metric.
  \end{abstract}
\maketitle


The space of pure states of a quantum mechanical system is the projective
space~$\PH$ of lines in a separable complex Hilbert space~$\bigl(\sH,\langle
-,- \rangle\bigr)$, which may be finite or infinite dimensional.  It carries
a symmetric function $p\:\PH\times \PH\to[0,1]$ whose value~$p(L_1,L_2)$ on
states $L_1,L_2\in \PH$ is the \emph{transition probability}: if $\psi _i\in
L_i$ is a unit norm vector in the line~$L_i$, then
  \begin{equation*}
     p(L_1,L_2) = |\langle \psi _1,\psi _2 \rangle|^2. 
  \end{equation*}
Let $\QAut(\PH)$~denote the group of symmetries of~$(\PH,p)$, the group of
quantum symmetries.  A fundamental theorem of Wigner\footnote{As I learned
in~\cite[p.~74]{Bo}, this theorem was first asserted in a 1928~joint
paper~\cite[p.~207]{VNW} of von Neumann and Wigner, though with only a brief
justification.  A more complete account appeared in Wigner's book (in the
original German) in~1931.}~\cite[\S20A, \S26]{Wi}, \cite{Ba}, \cite[\S2A]{We}
expresses~$\QAut(\PH)$ as a quotient of linear and antilinear symmetries of
~$\sH$.  This note began with the rediscovery of a formula which relates the
quantum geometry of~$(\PH,p)$ to a more familiar structure in differential
geometry: the Fubini-Study K\"ahler metric on~$\PH$.  It leads to two proofs
of Wigner's theorem, Theorem~\ref{thm:5} of this note, based on the
differential geometry of projective space.

The proofs here use more geometry than the elementary proofs~\cite{Ba},
\cite[\S2A]{We}.  We take this opportunity to draw attention to Wigner's
theorem and to the connection between quantum mechanics and projective
geometry.  It is a fitting link for a small tribute to Mike Freedman, whose
dual careers in topology and condensed matter physics continue to inspire.

\bigskip\bigskip\medskip
 Let $d\:\PH\times \PH\to\RR^{\ge0}$ be the distance function associated to
the Fubini-Study metric.

  \begin{theorem}[]\label{thm:1}
 The functions~$p$ and~$d$ are related by
  \begin{equation}\label{eq:2}
     \cos(d)=2p-1. 
  \end{equation}
  \end{theorem}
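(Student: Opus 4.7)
The plan is to use the isometry group of $\PH$ to reduce to a complex $2$-plane, and then to compute explicitly on $\CP^1$.

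\emph{Step 1 (Reduction to $\CP^1$).} The case $L_1=L_2$ is trivial: both sides of~\eqref{eq:2} equal~$1$. Otherwise, let $V=L_1+L_2\subset\sH$, a complex $2$-dimensional subspace. The transition probability $p(L_1,L_2)$ is manifestly intrinsic to~$V$. I claim the same holds for $d(L_1,L_2)$: the unitary $\id_V\oplus(-\id_{V^{\perp}})$ on~$\sH$ descends to a holomorphic isometry of $\PH$ whose fixed point set is exactly $\PV$, and the fixed point set of an isometry is totally geodesic. Hence the distance in $\PH$ agrees with the distance in $\PV\cong\CP^1$.

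\emph{Step 2 (Explicit computation on $\CP^1$).} Choose an orthonormal basis $(e_1,e_2)$ of $V$ and rotate phases so that $\psi_1=e_1$ and $\psi_2=(\cos\alpha)\,e_1+(\sin\alpha)\,e_2$ for a unique $\alpha\in(0,\pi/2]$. Then $p=\cos^2\alpha$. The curve $\tilde\gamma(t)=\cos(t\alpha)\,e_1+\sin(t\alpha)\,e_2$, $t\in[0,1]$, is a great-circle arc in the unit sphere of $V$ whose velocity is orthogonal to itself; it is therefore horizontal for the Hopf-type Riemannian submersion $S(V)\to\PV$, and its projection $\gamma$ is a minimizing geodesic in $\PV$ from $L_1$ to~$L_2$. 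Under the Fubini-Study normalization implicit in~\eqref{eq:2}---in which $\CP^1$ is a round $2$-sphere of diameter~$\pi$, so that orthogonal lines are antipodes---the geodesic $\gamma$ has length $2\alpha$, giving $d(L_1,L_2)=2\alpha$. The identity $\cos(d)=\cos(2\alpha)=2\cos^2\alpha-1=2p-1$ is then the double-angle formula.

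\emph{Main obstacle.} The real content is the totally geodesic reduction, which lets one bypass the curvature of $\PH$ and work on a model $2$-sphere. The potentially confusing point is the normalization: the Riemannian submersion from the unit sphere of $\sH$ directly produces a Fubini-Study metric with $d=\alpha$, and the factor of~$2$ in~\eqref{eq:2} records a rescaling that makes $\CP^1$ have intrinsic diameter~$\pi$, so that antipodal pairs on the Bloch sphere correspond to orthogonal states.
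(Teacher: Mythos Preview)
Your argument is correct and follows the same route as the paper: the identical reduction to $\PV\cong\CP^1$ by recognizing it as (a component of) the fixed set of the isometry induced by $\id_V\oplus(-\id_{V^\perp})$, followed by an explicit computation on the projective line---you use the Hopf submersion and the double-angle formula where the paper uses stereographic projection to the unit $2$-sphere. One small slip: the fixed locus of that involution is $\PV\cup\PP(V^\perp)$, not $\PV$ alone, but $\PV$ is a connected component and hence still totally geodesic, so your conclusion stands.
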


\noindent
 As a gateway into the literature on `geometric quantum mechanics',
where~\eqref{eq:2} can be found,\footnote{Notice that \eqref{eq:2} is
equivalent to~$p=\cos^2(d/2)$. } see~\cite{BH} and the references therein.

  \begin{corollary}[]\label{thm:2}
 $\QAut(\PH)$ is the group of isometries of~$\PH$ with the Fubini-Study
distance function.
  \end{corollary}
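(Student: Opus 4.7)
The plan is to read Corollary~\ref{thm:2} directly off Theorem~\ref{thm:1}. Both $\QAut(\PH)$ and the isometry group of $(\PH,d)$ sit inside the group of bijections $f\colon \PH\to \PH$, cut out respectively by the pointwise conditions $p(fL_1,fL_2)=p(L_1,L_2)$ and $d(fL_1,fL_2)=d(L_1,L_2)$ for all $L_1,L_2\in \PH$. It therefore suffices to show that these two conditions on $f$ are equivalent.

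First I would apply \eqref{eq:2} on both sides: a bijection $f$ preserves $p$ if and only if it preserves $2p-1$, which by Theorem~\ref{thm:1} is the same as preserving $\cos(d)$. Since $p$ takes values in $[0,1]$, equation \eqref{eq:2} forces $d$ to take values in an interval on which $\cos$ is strictly monotonic and hence injective. Consequently, preserving $\cos(d)$ pointwise is equivalent to preserving $d$ pointwise. Combining these two equivalences identifies $\QAut(\PH)$ with the isometry group.

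There is no real obstacle; the content of the corollary is essentially packaged into Theorem~\ref{thm:1}, and the only task is to unpack it. One point worth flagging in the write-up is the sense of \emph{isometry}: the natural reading, given that $d$ is the data, is a distance-preserving bijection of the metric space $(\PH,d)$. Under this reading the argument above is complete. If instead one wishes to speak of a Riemannian isometry, the Myers--Steenrod theorem ensures that any distance-preserving bijection of the connected Riemannian manifold $\PH$ is automatically a smooth Riemannian isometry, so the two notions coincide and no further work is required.
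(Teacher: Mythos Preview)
Your proposal is correct and matches the paper's approach: the paper states Corollary~\ref{thm:2} without proof, treating it as immediate from Theorem~\ref{thm:1}, and your argument is precisely the unpacking of that immediacy via the fact that $\cos$ is a bijection from the range of~$d$ onto~$[-1,1]$. Your remark on Myers--Steenrod likewise anticipates the paper's Remark~\ref{thm:3}.
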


  \begin{remark}[]\label{thm:3}
 If $\sH$~is infinite dimensional, then $\PH$~is an infinite dimensional
smooth manifold modeled on a Hilbert space.  Basic notions of calculus and
differential geometry carry over to Hilbert manifolds~\cite{L}.  The
Myers-Steenrod theorem asserts that a distance-preserving map between two
Riemannian manifolds is smooth and preserves the Riemannian metric.  That
theorem is also true on Riemannian manifolds modeled on Hilbert
manifolds~\cite{GJR}.\footnote{The proof depends on the existence of geodesic
convex neighborhoods, proved in~\cite[\S VIII.5]{L}.  For the Fubini-Study
metric on~$\PH$ such neighborhoods may easily be constructed explicitly.  I
thank Karl-Hermann Neeb for his inquiry about the Myers-Steenrod theorem in
infinite dimensions.}  So in the sequel we use that a distance-preserving map
$\phi \:\PH\to \PH$ is smooth and is an isometry in the sense of Riemannian
geometry.
  \end{remark}
 
The tangent space to~$\PH$ at a line~$L\subset \sH$ is canonically $T\mstrut
_L\PH\cong \CHom(L,\Lp)$, where $\Lp\subset \sH$ is the orthogonal complement
to~$L$, a closed subspace and therefore itself a Hilbert space.  If
$f_1,f_2\:L\to\Lp$, then the Fubini-Study hermitian metric is defined by
  \begin{equation}\label{eq:3}
     \langle f_1,f_2 \rangle = \Tr(f_1^*f\mstrut _2). 
  \end{equation}
The adjoint~$f_1^*$ is computed using the inner products on~$L$ and~$\Lp$.
The composition~$f_1^*f\mstrut _2$ is an endomorphism of~$L$, hence
multiplication by a complex number which we identify as the trace of the
endomorphism.  If $\ell \in L$ has unit norm, then the map
  \begin{equation}\label{eq:4}
     \begin{aligned} \CHom(L,\Lp) &\longrightarrow \Lp \\ f&\longmapsto
      f(\ell )\end{aligned} 
  \end{equation}
is a linear isometry for the induced metric on~$\Lp\subset \sH$.  The
underlying Riemannian metric is the real part of the hermitian
metric~\eqref{eq:3}; it only depends on the real part of the inner product
on~$\sH$.

  \begin{proof}[Proof of Theorem~\ref{thm:1}]
 Equation~\eqref{eq:2} is obvious on the diagonal in~$\PH\times \PH$, as well
as if $\dim\sH=1$.  Henceforth we rule out both possibilities.  Fix $L_1\not=
L_2\in \PH$ and let $V$~be the 2-dimensional space~$L_1+L_2\subset \sH$.  The
unitary automorphism of $\sH=V\oplus V^\perp$ which is~$+1$ on~$V$ and~$-1$
on~$V^\perp$ induces an isometry of~$\PH$ which has~$\PV$ as a component of
its fixed point set.  It follows that $\PV$~is totally geodesic.  Therefore,
to compute~$d(L_1,L_2)$ we are reduced to the case of the complex projective
line with its Fubini-Study metric: the round 2-sphere.
 
Let $e_1\in L_1$ have unit norm and choose~$e_2\in V$ to fill out a unitary
basis~$\{e_1,e_2\}$.  Then $\lambda e_1+e_2\in L_2$ for a unique $\lambda \in
\CC$.  If $\lambda =0$ then it is easy to check that~$d=\pi $ and~$p=0$,
consistent with~\eqref{eq:2}, so we now assume~$\lambda \not= 0$.
Identify~$\PV\setminus \{\CC\cdot e_2\}\approx \CC$ by $\CC\cdot (e_1+\mu
e_2)\leftrightarrow \mu $.  Use stereographic projection from the north
pole~$(1,0)$ in Euclidean 3-space $\RR\times \CC$ to identify~$\{0\}\times
\CC\approx S^2\setminus \{(1,0)\}$, where $S^2\subset \RR\times \CC$ is the
unit sphere.  Under these identifications we have
  \begin{equation*}
     \begin{aligned} L_1 &\longleftrightarrow \bigl(-1\,,\,0 \bigr) \\ L_2
      &\longleftrightarrow \bigl(-\,\frac{|\lambda |^2-1}{|\lambda |^2+1}\,,\,
      \frac{2|\lambda |^2}{|\lambda |^2+1}\,\frac 1\lambda
      \bigr)\end{aligned} 
  \end{equation*}
from which $\cos(d) = (|\lambda |^2-1)/(|\lambda |^2+1)$ can be computed as
the inner product of vectors in the 3-dimensional vector space~$\RR\oplus
\CC$.  Since $p=|\lambda |^2/(|\lambda |^2+1)$, equation~\eqref{eq:2} is
satisfied.
  \end{proof}
 
\bigskip
 A \emph{real} linear map $S\:\sH\to\sH$ is \emph{antiunitary} if it is
conjugate linear and  
  \begin{equation*}
     \langle S\psi _1,S\psi _2 \rangle = \overline{\langle \psi _1,\psi _2
     \rangle}\qquad \textnormal{for all $\psi _1,\psi _2\in \sH$}. 
  \end{equation*}
Let $\GH$~denote the group consisting of all unitary and antiunitary
operators on~$\sH$.  In the norm topology it is a Banach Lie group~\cite{M}
with two contractible components; the same is true in the compact-open
topology~\cite[Appendix~D]{FM}.  The identity component is the group~$U(\sH)$ of
unitary transformations.  Any $S\in \GH$ maps complex lines to complex lines,
so induces a diffeomorphism of~$\PH$, and since $S$~preserves the real part
of~$\langle -,- \rangle$ the induced diffeomorphism is an isometry.  The unit
norm scalars $\TT\subset \GH$ act trivially on~$\PH$, so there is an
exact\footnote{We assume $\dim\sH>1$.}  sequence of Lie groups
  \begin{equation}\label{eq:7}
     1\longrightarrow \TT\longrightarrow \GH\longrightarrow
     \qsym.
  \end{equation}
Note that $\TT$~is not central since antiunitary maps conjugate scalars.

  \begin{theorem}[Wigner~\cite{Wi}]\label{thm:5}
 The homomorphism $\GH\to\qsym$ is surjective: every quantum symmetry
of~$\PH$ lifts to a unitary or antiunitary operator on~$\sH$.
  \end{theorem}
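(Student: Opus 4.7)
The plan combines Corollary~\ref{thm:2} and Remark~\ref{thm:3}---giving that any $\phi\in\qsym$ is a smooth isometry of the Fubini-Study Riemannian metric---with the K\"ahler structure of $\PH$. It suffices to show such an isometry is either holomorphic or antiholomorphic with respect to the complex structure $J$ on $\PH$, and then to lift it to an element of $\GH$.

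\medskip

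\noindent\emph{Holomorphic/antiholomorphic dichotomy.} The Fubini-Study metric has constant positive holomorphic sectional curvature. Its Riemann curvature tensor recovers the unordered pair $\{J,-J\}$ on each tangent space: among 2-planes through a given unit vector, the $J$-invariant ones are distinguished as those of maximal sectional curvature. Since $d\phi_L$ preserves the metric and curvature, it intertwines $J$ up to a sign $\epsilon(L)\in\{\pm1\}$. Continuity of $\epsilon$ on the connected manifold $\PH$ forces it to be constant. Fixing an antiunitary $C\in\GH$ (say complex conjugation in a chosen orthonormal basis) and replacing $\phi$ with $\bar{C}\circ\phi$ when $\epsilon=-1$, we may assume $\phi$ is holomorphic; an antiunitary lift of the original $\phi$ is then recovered from a unitary lift of the composition.

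\medskip

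\noindent\emph{Building the unitary lift.} Fix a unit vector $\psi\in\sH$ with line $L_0=\CC\psi$. Since $U(\sH)$ acts transitively on $\PH$, after composing $\phi$ with a unitary we may further assume $\phi(L_0)=L_0$. The differential $d\phi_{L_0}$ is then a complex-linear isometry of $T_{L_0}\PH$; under the canonical isomorphisms $T_{L_0}\PH\cong\CHom(L_0,\Lp_0)\cong\Lp_0$ of~\eqref{eq:4} it corresponds to a unitary $U_0$ on $\Lp_0$. Extend $U_0$ to $U\in U(\sH)$ by declaring $U\psi=\psi$. The isometry $\bar{U}$ of $\PH$ induced by $U$ agrees with $\phi$ to first order at $L_0$.

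\medskip

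\noindent\emph{Rigidity; main obstacle.} A Riemannian isometry of a connected complete Riemannian manifold is determined by its 1-jet at a single point, by propagation along geodesics from that point; the argument extends to the Hilbert-manifold setting because $\PH$ is geodesically complete and admits geodesic convex neighborhoods (Remark~\ref{thm:3}). Hence $\phi=\bar{U}$, completing the proof. The main obstacle, I expect, is the curvature-theoretic dichotomy of the first step in infinite dimensions: the question is pointwise and reduces, via the totally geodesic $\PV\subset\PH$ for a finite-dimensional $V\subset\sH$ (as in the proof of Theorem~\ref{thm:1}), to the classical Fubini-Study calculation on $\CP^n$, but this reduction must be spelled out with care.
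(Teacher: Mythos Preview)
Your proof is correct and is a hybrid of the paper's two arguments, arguably more economical than either. The holomorphic/antiholomorphic dichotomy is exactly the paper's Lemma~\ref{thm:9}; there it is proved by showing that any parallel almost complex structure commuting with the curvature operator equals $\pm I$, rather than by maximizing sectional curvature, but the content is the same. Where the routes diverge is the lifting step. The paper's second proof constructs the lift from the tautological holomorphic line bundle $\sL\to\PH$ and the induced action on its global sections---heavier machinery, but it produces the lift without singling out a base point. The paper's first proof, like yours, fixes a point and linearizes there via Lemma~\ref{thm:8}, but it does \emph{not} first establish holomorphicity; instead it must show by a separate algebraic computation that the resulting real-linear isometry of $L^\perp$ is $\CC$-linear or $\CC$-antilinear. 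By settling the dichotomy in advance you already know $d\phi_{L_0}$ is $\CC$-linear, so $1$-jet rigidity of isometries finishes at once; the price is the appeal to curvature, which the first proof avoids entirely.
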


\noindent 
 By Corollary~\ref{thm:2} the same is true for isometries of the Fubini-Study
metric, and indeed we prove Wigner's Theorem by computing the group of
isometries.

  \begin{remark}[]\label{thm:4}
 If $\rho \:G\to\qsym$ is any group of quantum symmetries, then the
surjectivity of $\GH\to\qsym$ implies the extension~\eqref{eq:7} pulls back
to a twisted central extension of~$G$.  The twist is the homomorphism
$G\to\zt$ which tells whether a symmetry lifts to be unitary or antiunitary.
The isomorphism class of this twisted central extension is then an invariant
of~$\rho $.  This is the starting point for joint work with Greg
Moore~\cite{FM} about quantum symmetry classes and topological phases in
condensed matter physics.
  \end{remark}

  \begin{example}[]\label{thm:7}
 $\PP(\CC^2)=\CP^1$~ with the Fubini-Study metric is the round 2-sphere of
unit radius.  Its isometry group is the group~$O(3)$ of orthogonal
transformations of~$SO(3)$.  The identity component~$SO(3)$ is the image of
the group~$U(2)$ of unitary transformations of~$\CC^2$.  The other component
of~$O(3)$ consists of orientation-reversing orthogonal transformations, such
as reflections, and they lift to antiunitary symmetries of~$\CC^2$.  In this
case the group~$G(\sH)$ is also known as~$\Pin^c(3)$; see~\cite{ABS}.
  \end{example}

\bigskip
 We present two proofs of Theorem~\ref{thm:5}.  The first is based on the
following standard fact in Riemannian geometry.

  \begin{lemma}[]\label{thm:8}
 Let $M$~be a Riemannian manifold, $p\in M$, and $\phi \:M\to M$ an isometry
with~$\phi (p)=p$.  Suppose $B_r\subset T_pM$ is the open ball of radius~$r$
centered at the origin and assume the Riemannian exponential map~$\exp_p$
maps~$B_r$ diffeomorphically into~$M$.  Then in exponential coordinates $\phi
\res{B_r}$~ equals the restriction of the linear isometry~$d\phi _p$
to~$B_r$. 
  \end{lemma}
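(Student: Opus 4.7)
The plan is to verify the identity $\phi(\exp_p(v))=\exp_p(d\phi_p(v))$ for every $v\in B_r$. Translated back through the exponential chart, this is exactly the assertion that $\phi\res{B_r}$ coincides with the linear map $d\phi_p\res{B_r}$. Observe first that $d\phi_p$ is a linear isometry of $T_pM$, so it preserves the norm and hence sends $B_r$ into itself; in particular the right-hand side $\exp_p(d\phi_p(v))$ is defined for all $v\in B_r$.

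The key mechanism is that a smooth isometry preserves the Levi--Civita connection, and therefore sends geodesics to geodesics. Concretely, for any geodesic $\gamma$ in $M$ the curve $\phi\circ\gamma$ is again a geodesic, with initial point $\phi(\gamma(0))$ and initial velocity $d\phi_{\gamma(0)}(\dot\gamma(0))$. I would apply this to the radial geodesic $\gamma_v(t)=\exp_p(tv)$, which is well-defined for $t\in[0,1]$ by the hypothesis on $B_r$. Since $\phi(p)=p$ and the initial velocity of $\phi\circ\gamma_v$ at $t=0$ is $d\phi_p(v)$, uniqueness of geodesics with prescribed initial conditions gives
\begin{equation*}
\phi(\exp_p(tv))=\exp_p(t\,d\phi_p(v))\qquad\text{for }t\in[0,1].
\end{equation*}
Evaluating at $t=1$ yields the claim.

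Since the setting here is Hilbert rather than finite-dimensional, one subtle point is that the usual ingredients---smoothness of $\phi$ (given by Myers--Steenrod as recalled in Remark~\ref{thm:3}), existence and uniqueness of geodesics as integral curves of the spray on $TM$, and the invariance of the Levi--Civita connection under smooth isometries---must be available for Riemannian Hilbert manifolds. All of these are in Lang's development, so no new work is required. I do not anticipate any genuine obstacle; the only thing to be careful about is that the geodesic $\phi\circ\gamma_v$ remains inside $\exp_p(B_r)$ for $t\in[0,1]$, which is automatic because $d\phi_p$ preserves $B_r$ and $\exp_p\res{B_r}$ is a diffeomorphism.
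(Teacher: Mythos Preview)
Your argument is correct and is essentially the same as the paper's: both prove $\phi\circ\exp_p=\exp_p\circ d\phi_p$ on $B_r$ by using that an isometry takes the radial geodesic with initial velocity $\xi$ to the geodesic with initial velocity $d\phi_p(\xi)$. Your version just spells out a few more details (that $d\phi_p$ preserves $B_r$, and the Hilbert-manifold caveats), which are all fine.
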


  \begin{proof}
 If $\xi \in B_r$, then $\exp_p(\xi )=\gamma _\xi (1)$, where $\gamma _\xi
\:[0,1]\to M$ is the unique geodesic which satisfies $\gamma _\xi
(0)=p,\;\dot\gamma _\xi (0)=\xi $.  Since $\phi $~maps geodesics to
geodesics, $\phi \circ \exp_p = \exp_p\circ d\phi _p$ on~$B_r$, as desired.
  \end{proof}

\noindent
 If $\rho \:[0,r')\to[0,r)$ is a diffeomorphism for some~$r'>0$, then 
  \begin{equation}\label{eq:8}
     \xi  \longmapsto \exp_p\bigl(\rho (|\xi |)\xi \bigr) 
  \end{equation}
maps~$B_{r'}$ diffeomorphically into~$M$, and $\phi $~in this coordinate
system is also linear. 

  \begin{proof}[First Proof of Theorem~\ref{thm:5}]
 Let $\phi \:\PH\to\PH$ be an isometry.  Composing with an isometry in~$\GH$
we may assume $\phi (L)=L$ for some~$L\in \PH$.  The tangent space~$T_L\PH$
is canonically~$\CHom(L,\Lp)$, and also $f\in \CHom(L,\Lp)$ determines
$\Gf\in \PH$ by $\Gf\subset \sH=L\oplus \Lp$ is the graph of~$f$.  We claim
$f\mapsto\Gf$ has the form \eqref{eq:8} for some $\rho \:[0,\infty )\to[0,\pi
)$.  It suffices to show that for any $f\in \CHom(L,\Lp)$ of unit norm, the
map $t\mapsto \Gamma _{tf}$ traces out a (reparametrized) geodesic in a
parametrization independent of~$f$.  As in the proof of Theorem~\ref{thm:1}
this reduces to $\dim\sH=2$ and so to an obvious statement about the round
2-sphere.  It follows from Lemma~\ref{thm:8} that $\phi $~is a \emph{real}
isometry $S\in \End_{\RR}\bigl(\CHom(L,\Lp) \bigr)$.  It remains to prove
that $S$~is complex linear or antilinear; then we extend~$S$ by the identity
on~$L$ to obtain a unitary or antiunitary operator on $\sH=L\oplus \Lp$.
 
If $\dim\sH=2$ then Theorem~\ref{thm:5} can be verified (see
Example~\ref{thm:7}), so assume $\dim\sH>2$.  Identify $\CHom(L,\Lp)\approx
\Lp$ as in~\eqref{eq:4}.  Since $S\in \End_{\RR}(\Lp)$ maps complex lines
in~$\Lp$ to complex lines, there is a function $\alpha \:\Lp\setminus
\{0\}\to\CC$ such that $S(i\xi )=\alpha (\xi )S(\xi )$ for all nonzero~$\xi
\in \Lp$.  Fix $\xi \not= 0$ and choose~$\eta \in \Lp$ which is linearly
independent.  Then
  \begin{equation*}
     \begin{split} S\bigl(i(\xi +\eta ) \bigr)&= \alpha (\xi +\eta
      )\bigl[S(\xi )+S(\eta ) \bigr] \\ &= \alpha (\xi )S(\xi ) + \alpha
      (\eta )S(\eta )\end{split} 
  \end{equation*}
from which $\alpha (\xi )=\alpha (\eta )$.  Applied to~$i\xi ,\eta $ we learn
$\alpha (\xi )=\alpha (i\xi )$.  On the other hand, 
  \begin{equation*}
     -S(\xi )=S(-\xi )=\alpha (i\xi )S(i\xi ) = \alpha (i\xi )\alpha (\xi
     )S(\xi ), 
  \end{equation*}
whence $\alpha (\xi )^2=-1$.  By continuity either $\alpha \equiv i$
or~$\alpha \equiv -i$, which proves that $S$~is linear or $S$~is antilinear. 
  \end{proof}

\bigskip

The second proof leans on complex geometry. 

  \begin{lemma}[]\label{thm:9}
 An isometry $\phi \:\PH\to\PH$ is either holomorphic or antiholomorphic. 
  \end{lemma}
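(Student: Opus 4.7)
The plan is to exploit that the Fubini--Study metric is K\"ahler with constant positive holomorphic sectional curvature, so that the complex structure $J$ on $T\PH$ is recoverable (up to sign) from the Riemannian curvature. Concretely, for orthonormal $X,Y\in T_p\PH$, the sectional curvature of the real 2-plane they span is a strictly increasing function of $\langle JX,Y\rangle^2\in[0,1]$, whose maximum is attained precisely when the plane is $J$-invariant, i.e.\ a complex line. As in the proof of Theorem~\ref{thm:1}, one verifies this by reducing to a computation on the round 2-sphere via the totally geodesic embeddings $\PV\hookrightarrow\PH$ for 2-dimensional complex subspaces $V\subset\sH$.

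Since $\phi$ preserves sectional curvatures, the differential $d\phi_p\colon T_p\PH\to T_{\phi(p)}\PH$ sends complex lines to complex lines at every $p$. If $\dim\sH=2$ then $\PH\cong S^2$ and the lemma follows from Example~\ref{thm:7}: an isometry of the round 2-sphere is orientation-preserving (holomorphic) or orientation-reversing (antiholomorphic). If $\dim\sH>2$ then $T_p\PH$ has complex dimension at least $2$, and the algebraic argument used in the first proof of Theorem~\ref{thm:5}---applied now to the real linear isometry $d\phi_p$ between complex vector spaces equipped with $J$---shows that at each point $d\phi_p\circ J=\varepsilon(p)\,J\circ d\phi_p$ with $\varepsilon(p)\in\{\pm1\}$.

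To globalize, I would observe that the two subsets
\[
A_+=\{p\in\PH: d\phi_p\circ J=J\circ d\phi_p\}\qquad\text{and}\qquad A_-=\{p\in\PH: d\phi_p\circ J=-J\circ d\phi_p\}
\]
are each closed in $\PH$, their union is $\PH$ by the previous paragraph, and they are disjoint because $d\phi_p$ is invertible (no nonzero real linear map can simultaneously commute and anticommute with $J$). Since $\PH$ is connected, either $A_+=\PH$, in which case $\phi$ is holomorphic, or $A_-=\PH$, in which case $\phi$ is antiholomorphic.

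The main obstacle is the curvature characterization of complex lines in the first paragraph; once this is established the remainder is a pointwise application of the already-proved algebraic fact together with a connectedness argument. Intuitively, K\"ahler rigidity is doing all the work: the constancy of the holomorphic sectional curvature together with its strict dominance over the totally real directions forces any isometry to respect the complex structure up to sign.
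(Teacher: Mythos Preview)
Your argument is correct. The overall strategy---characterize complex lines in the tangent space as the real $2$-planes of maximal sectional curvature, deduce that $d\phi_p$ sends complex lines to complex lines, then invoke the algebraic lemma from the first proof of Theorem~\ref{thm:5} pointwise and globalize by connectedness---works. The paper's proof also rests on curvature but packages it differently: it observes that since the complex structure $I$ is parallel (K\"ahler) and $\phi$ is an isometry, the pullback $\phi^*I$ is another parallel almost complex structure, hence commutes with the curvature operator; the explicit form of $R$ in constant holomorphic sectional curvature then forces $\phi^*I=\pm I$ globally in a single stroke, with no separate connectedness step. Your route is more hands-on and recycles the algebraic lemma already proved, while the paper's is a self-contained holonomy-style argument; both exploit the same underlying rigidity.

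One small correction: reducing to $\PV$ with $\dim_\CC V=2$ only exhibits the \emph{holomorphic} sectional curvature, since the tangent space of $\CP^1$ at any point is itself a single complex line and contains no totally real $2$-plane. To verify the full formula expressing $K(X,Y)$ as a strictly increasing function of $\langle JX,Y\rangle^2$ for arbitrary orthonormal $X,Y$, you must take $\dim_\CC V=3$, so that $\PV\cong\CP^2$ is large enough to contain the real $2$-plane spanned by $X,Y$. This is standard (cf.\ \cite[\S IX.7]{KN}, which the paper also cites) and does not affect the validity of your argument.
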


  \begin{proof}
 Let $I\:T\PH\to T\PH$ be the (almost) complex structure.  Then $I$~is
parallel with respect to the Levi-Civita covariant derivative, since $\PH$~is
K\"ahler, and so therefore is~$\phi ^*I$.  We claim \emph{any} parallel
almost complex structure~$J$ equals~$\pm I$; the lemma follows immediately.
 
If $J$~is parallel, then it commutes with the Riemann curvature tensor~$R$.
Compute at~$L\in \PH$ and identify $T_L\PH\approx \Lp$, as in~\eqref{eq:4}.
Then if $\xi ,\eta \in \Lp$ and $\langle \xi ,\eta \rangle=0$, since
$\PP(L\,\oplus\, \CC\cdot \xi \,\oplus\, \CC\cdot \eta )\subset \PH$ is
totally geodesic and has constant holomorphic sectional curvature
one~\cite[\S IX.7]{KN}, we compute
  \begin{equation*}
     \begin{aligned} R(\xi ,I\xi )\xi &=-|\xi |^2I\xi , \\ R(\xi ,I\xi
      )\eta &=-\frac12|\xi |^2I\eta .\end{aligned} 
  \end{equation*}
It follows that $J$~preserves every complex line~$K=\CC\cdot \xi \subset \Lp$
and commutes with~$I$ on~$K$.  Therefore, $J=\pm I$ on~$K$.  By continuity,
the sign is independent of~$K$ and~$L$.
  \end{proof}

  \begin{proof}[Second Proof of Theorem~\ref{thm:5}]
 First, recall that if $U$~is finite dimensional, then every holomorphic
symmetry of~$\PU$ is linear.  The proof is as follows.  Let $\sL\to\PU$ be
the canonical holomorphic line bundle whose fiber at~$L\in \PU$ is~$L$.  A
holomorphic line bundle on~$\PU$ is determined by its Chern class, so $\phi
^*\sL\cong \sL$.  Fix an isomorphism; it is unique up to scale.  There is an
induced linear map on the space $H^0(\PU;\sL^*)\cong U^*$ of global
holomorphic sections:
  \begin{equation}\label{eq:a}
     \phi ^*\:H^0(\PU;\sL^*)\longrightarrow H^0(\PU;\phi ^*\sL^*)\cong
     H^0(\PU;\sL^*).
  \end{equation}
The transpose~$\hf$ of~\eqref{eq:a} is the desired linear lift of~$\phi $.
 
Let $\phi \:\PH\to\PH$ be an isometry.  After composition with an element
of~$G(\sH)$ we may, by Lemma~\ref{thm:9}, assume $\phi $~is holomorphic and
fixes some~$L\in \PH$.  Let $U\subset \sH$ be a finite dimensional subspace
containing~$L$.  Then the pullback of $\sL_{\sH}\to\PH$ to~$\phi
^*\sL_{\sH}\res{\PP U}\to\PP U$ has degree one, so is isomorphic
to~$\sL_U\to\PP U$, and there is a unique isomorphism which is the identity
on the fiber over~$L$.  A functional~$\alpha \in \sH^*$ restricts to a
holomorphic section of~$\phi ^*\sL_{\sH}^*\res{\PP U}\to\PU$, so by
composition with the isomorphism~$\phi ^*\sL_{\sH}^*\res{\PP U}\cong \sL_U^*$
to an element of~$U^*$.  The resulting map~$\sH^*\to U^*$ is linear, and its
transpose $\hf\:U\to\sH$ is the identity on~$L$.  Let $U$~run over all finite
dimensional subspaces of~$\sH$ to define $\hf\:\sH\to\sH$.  The uniqueness of
the isomorphism~$\phi ^*\sL_{\sH}\res{\PP U}\cong \sL_U$ implies that
$\hf$~is well-defined and a linear lift of~$\phi $.  It is unitary since
$\phi $~is an isometry.
  \end{proof}

\providecommand{\bysame}{\leavevmode\hbox to3em{\hrulefill}\thinspace}
\providecommand{\MR}{\relax\ifhmode\unskip\space\fi MR }
\providecommand{\MRhref}[2]{%
  \href{http://www.ams.org/mathscinet-getitem?mr=#1}{#2}
}
\providecommand{\href}[2]{#2}

  \end{document}